\algrenewcommand\algorithmicprocedure{\small \textbf{\textsf{procedure}}}
\normalsize \textsc{\textrm{#1}}#2}
\algnewcommand\And{\textbf{and} }
\definecolor{mtlbgreen}{RGB}{34, 139, 34}
\algrenewcommand{\algorithmiccomment}[1]{\small\sffamily\textcolor{mtlbgreen}{\textit{/ #1}}} 
\newcommand{\removelatexerror}{\let\@latex@error\@gobble}
\def\NAT@spacechar{~}
\DeclareMathAlphabet\EuRoman{U}{eur}{m}{n}
\SetMathAlphabet\EuRoman{bold}{U}{eur}{b}{n}
\newcommand{\eurom}{\EuRoman}
\newtheorem{thm}{Theorem}
\newtheorem{cor}{Corollary}
\newtheorem{lem}{Lemma}
\newcommand{\C}{\mathsf{C}}
\newcommand{\I}{\mathsf{I}}
\newcommand{\mi}[2]{{\I}\mleft(#1 \, ; #2 \mright)}
\newcommand{\HH}{\mathsf H}
\newcommand{\h}{\mathsf h}
\newcommand{\ent}[1]{{\h}\mleft(#1\mright)}
\newcommand{\entcnd}[2]{{\h}\mleft(\mleft. #1 \,\mright| #2\mright)}
\newcommand{\Exp}{\mathsf E}
\newcommand{\expect}[1]{{\Exp}\mleft[#1\mright]}
\newcommand{\expcnd}[2]{{\Exp}\mleft[\mleft. #1 \,\mright|\, #2\mright]}
\newcommand{\Gammaf}[1]{\Gamma \mleft( #1 \mright)}
\newcommand{\X}{{\bf X}}
\newcommand{\Y}{{\bf Y}}
\newcommand{\RR}{\mathbb{R}}
\newcommand{\Yv}{{\bf Y}}
\newcommand{\Xv}{{\bf X}}
\newcommand{\xv}{{\bf x}}
\newcommand{\Wv}{{\bf W}}
\newcommand{\Iv}{\textbf{I}}
\newcommand{\lr}[1]{\mleft( #1 \mright)}
\newcommand{\lrs}[1]{\mleft[ #1 \mright]}
\newcommand{\lrc}[1]{\mleft\{ #1 \mright\}}
\newcommand{\N}{\eurom{N}}
\newcommand{\A}{\eurom{A}}
\newcommand{\K}{\eurom{K}}
\newcommand{\Ss}{\eurom{S}}
\newcommand{\Ell}{\mathcal{L}}
\pgfplotsset{compat=1.17}
\begin{document}
 
\title{The Capacity of the Amplitude-Constrained Vector Gaussian Channel}

\author{%
   \IEEEauthorblockN{Antonino Favano\IEEEauthorrefmark{1}\IEEEauthorrefmark{2},
                     Marco Ferrari\IEEEauthorrefmark{2},
                      Maurizio Magarini\IEEEauthorrefmark{1},
                     and Luca Barletta\IEEEauthorrefmark{1}}
   \IEEEauthorblockA{\IEEEauthorrefmark{1}%
                     Politecnico di Milano,
                     Milano, Italy,
                     \{antonino.favano, maurizio.magarini, luca.barletta\}@polimi.it}
   \IEEEauthorblockA{\IEEEauthorrefmark{2}%
                     CNR-IEIIT, Milano, Italy,
                     marco.ferrari@ieiit.cnr.it}
 }

\maketitle

\begin{abstract}
The capacity of multiple-input multiple-output additive white Gaussian noise channels is investigated under peak amplitude constraints on the norm of the input vector. New insights on the capacity-achieving input distribution are presented. Furthermore, it is provided an iterative algorithm to numerically evaluate both the information capacity and the optimal input distribution of such channel.
\end{abstract}
\vspace*{-.1cm}

\section{Introduction}
The channel capacity of multiple-input multiple-output (MIMO) additive white Gaussian noise (AWGN) channels under peak amplitude constraints is of great interest for wireless communication systems, since such systems are typically subject to peak power constraints to cope with the nonlinear nature of power amplifiers. In~\cite{Smith}, the author proves that, for the scalar Gaussian channel, the capacity-achieving input probability distribution is discrete with a finite number of mass points. A similar result is extended to quadrature Gaussian channels in~\cite{Shamai}, which proves that the optimal input distribution is discrete in amplitude, with a finite number of mass points, and independent uniformly distributed phase. In~\cite{Rassouli}, the authors generalize the mentioned result for vector Gaussian channels, by proving that the capacity-achieving input distribution has probability masses assigned to a finite discrete set of amplitude values, each uniformly distributed on the hyper-sphere with the corresponding radius. Although the structure of the optimal input distribution is known, the evaluation of the channel capacity is nontrivial and requires to determine the optimal set of amplitude values and the associated probability masses. An upper bound on the number of mass points of the optimal input distribution is presented in~\cite{DytsoShells}. Moreover, capacity bounds for MIMO channels under constrained norm of the input vector are derived in~\cite{thangaraj2017capacity}.

In the mentioned works, the MIMO channel matrix is assumed to be an identity matrix. More general results, valid for arbitrary MIMO channel matrices and peak-amplitude constraints are presented in~\cite{Dytso} and~\cite{ourISIT2020}.
\vspace*{-.1cm}

\subsection*{Contribution}
We investigate the optimal input distribution and the information capacity of an AWGN MIMO system under a peak-amplitude constraint on the norm of the input vector. We show some properties of the capacity-achieving distribution and we propose a lower bound on its number of mass points. Finally, we numerically derive the optimal input distribution and the channel capacity via the combination of two nested iterative algorithms, which are variants of the Blahut-Arimoto and the Dynamic Assignment Blahut-Arimoto algorithms~\cite{Farsad}.
\vspace*{-.1cm}

\subsection*{Paper Organization}
In Sec.~\ref{S:PrevWorks}, we summarize the main results available prior to this paper. In Sec.~\ref{S:sysmod}, we define the system model and then, in Sec.~\ref{S:nAMP}, we provide new insights on the support of the capacity-achieving distribution. Section~\ref{S:ShellsC} introduces the iterative algorithms used to numerically evaluate the channel capacity and the associated input distribution. As a case study, we present numerical results for a $2 \times 2$ complex-valued MIMO system. Finally, Sec.~\ref{S:conclusion} concludes the paper.
\vspace*{-.1cm}

\subsection*{Notation}
We use bold letters for vectors ($\xv$) and uppercase letters for random variables ($X$). The amplitude (Euclidean norm) and direction of $\xv$ are denoted by $|\xv|$ and $\angle \xv \triangleq \xv/|\xv|$, respectively. Given a random variable $X$, its probability density function (PDF), mass function (PMF), and cumulative distribution function are denoted by $f_X$, $P_X$, and $F_X$, respectively. We denote by $\I$ the mutual information, by $\h$ the differential entropy, and by $\HH$ the entropy. Moreover, $I_n(\cdot)$ is the modified Bessel function of the first kind of order $n$. We represent the $n \times 1$ vector of zeros by $\textbf{0}_n$ and the $n \times n$ identity matrix by $\Iv_n$. We denote by ${\cal CN}(\boldsymbol{\mu},\mathsf{\Sigma})$ a multivariate proper complex Gaussian distribution with mean vector $\boldsymbol{\mu}$ and covariance matrix $\mathsf{\Sigma}$, while ${\cal U}(a,b)$ is the scalar uniform distribution over the interval $(a,b)$. We define $\chi^2_{k}(\xi)$ as the noncentral chi-squared distribution with $k$ degrees of freedom and with noncentrality parameter $\xi$, and we indicate its PDF with $f_{\chi^2_{k}(\xi)}$.

\section{Previous Works} \label{S:PrevWorks}
By using the dual capacity expression, the authors of~\cite{thangaraj2017capacity} provide a capacity upper bound for the AWGN MIMO channel subject to a constraint on the norm of the input vector. The upper bound can be evaluated analytically up to a certain signal-to-noise ratio (SNR) level, and computed numerically otherwise. For the complex AWGN channel, a lower bound close to the upper bound is provided for any SNR value. Instead, for $n \times n$ AWGN MIMO channels with $n>1$ they propose a capacity lower bound based on the entropy power inequality (EPI). Although being small in the high SNR regime, the capacity gap between upper bound and lower bound widens at intermediate SNR. Moreover, those bounds do not provide definite insights on the optimal input distribution at intermediate SNR. The authors of~\cite{DytsoShells} provide a lower bound on the number of amplitude values for scalar Gaussian channels and upper bounds for both the scalar and vector cases. Finally, important insights on the optimal input distribution in the low SNR regime are presented in~\cite{Dytso1shell}.

\section{Channel Model} \label{S:sysmod}
Let us consider an $\N \times \N$ complex AWGN MIMO system. The input-output relationship is defined by
\begin{align}
    \Yv = \Xv + \Wv,
\end{align}
where $\Yv$$\,\in\,$$\mathbb{C}^{\N}$ and  $\Xv$$\,\in\,$$\mathbb{C}^{\N}$ are the output and input vectors, respectively, and the noise vector $\Wv$$\,\in\,$$\mathbb{C}^{\N}$ is such that $\Wv$$\,\sim\,$${\cal CN}(\textbf{0}_{\N}, 2\Iv_{\N})$. The peak amplitude constraint on the norm of the input vector is given by $|\Xv|$$\,\le\,$$ \A$, for $\A \in \RR^+$. The MIMO channel capacity is then
\begin{align} 
    \C(\A) &\triangleq \max_{F_\Xv: \: |\X|\le \A } \mi{\Xv}{\Yv} \\
    &= \max_{F_\Xv: \: |\X| \le \A } \ent{\Yv} - \ent{\Wv}, \label{eq:capacity}
\end{align}
where $F_\Xv$ is the input probability law. 

By symmetry of the problem, the optimal input and output distributions are isotropic. Therefore, we can transform the maximization over $F_\X$ in~\eqref{eq:capacity} into a mono-dimensional problem, dependent only on the amplitude $|\X|$. We notice that
\begin{align}
    \ent{\Yv} &\stackrel{(a)}{=} \ent{|\Yv|} + (2\N - 1) \expect{\log |\Yv|} + \h_\lambda \lr{\angle{\Yv}} \\
        &\stackrel{(b)}{=} \ent{|\Yv|^2} + (\N - 1) \expect{\log |\Yv|^2} + \log \frac{\pi^\N}{\Gammaf{\N}}, \label{eq:absYent}
\end{align}
where step $(a)$ holds by~\mbox{\cite[Lemma~6.17]{LapidothMoser}} and by independence between $|\Yv|$ and $\angle{\Yv}$, $\h_\lambda \lr{\cdot}$ is a differential \mbox{entropy-like} quantity for random vectors on the unit sphere in $\mathbb{C}^\N$~\mbox{\cite[Lemma~6.16]{LapidothMoser}}, and step $(b)$ holds because $\angle{\Yv}$ is uniform and thanks to~\mbox{\cite[Lemma~6.15]{LapidothMoser}}. By defining
\begin{align}
    \ell(F_\Xv) \triangleq \ent{|\Yv|^2} + (\N - 1) \expect{\log |\Yv|^2},
\end{align}
we can rewrite the capacity as
\begin{align}
  	\C(\A)&=\max_{F_\X :|\X|\le \A} \ell(F_{\X}) + \log\frac{\pi^\N}{\Gamma(\N)}-\ent{\Wv} \\
	&=\max_{F_\X :|\X|\le \A} \ell(F_{\X}) - \log((2e)^\N\Gamma(\N)). \label{eq:finalC}
\end{align}
Furthermore, we remark that given
\begin{align}
  |\Yv|^2 & =  \sum_{i=1}^{\N}  |X_i  +  W_i|^2  \stackrel{d}{=} \big| |\Xv| +  W_1 \big|^2 + \sum_{i=2}^{\N}  |W_i|^2,
\end{align}
where $\stackrel{d}{=}$ means equality in distribution, each component of the output vector is $(Y_i | X_i = x_i) \sim {\cal CN}(x_i,2) $, therefore $(|\Yv|^2 \hspace{2pt} | \hspace{2pt} \Xv = \xv ) \sim \chi^2_{2\N}(\xi)$ with \mbox{$\xi = \sum_{i=1}^{\N} |x_i|^2 = |\xv|^2$}.
From~\cite{Rassouli}, we know that the optimal $F_\Xv$ is composed of a finite number of hyper-spheres centered in the origin. We assign the probability $p_i$ to the $i$th sphere with radius $\rho_i \le \A$. Since $\Xv$ is isotropically distributed, the probability density of a point on the sphere of radius $\rho_i$ is $p_i/\Ss_i$ where $\Ss_i \triangleq 2 \pi^\N \rho_i^{2\N-1} / \Gammaf{\N}$ is the hyper-surface measure for the $i$th sphere. Therefore, the PDF of the optimal output $|\Yv^\star|^2$ takes the form
\begin{align} \label{eq:absYpdf}
    f_{|\Yv^\star|^2}(y) = \sum_{i=1}^\K p_i f_{\chi^2_{2\N}(\rho^2_i)}(y), \qquad y>0.
\end{align}
By plugging~\eqref{eq:absYpdf} into~\eqref{eq:finalC}, we notice that the maximization can now be carried out just over any \mbox{mono-dimensional} PMF $P_{|\Xv|}$ defined as
\begin{align}
    P_{|\Xv|}(\rho_i) \triangleq p_i, \ \forall i=1,\dots,\K.
\end{align}
Finally, we define the \mbox{$\text{SNR} \triangleq \A^2 / (2 \N )$}.
In the following section, we provide new insights on the capacity-achieving input distribution $P_{|\X^\star|}$.

\section{New Insights on the Optimal Input PMF} \label{S:nAMP}
Although it is not simple to evaluate the optimal number of points in the input PMF and the corresponding amplitude values, there are some special cases that are more tractable and provide some intuitions about the general \emph{shape} of the optimal input PMF. \mbox{In~\cite[Thm.~2]{Dytso1shell}}, the authors prove that at low SNR (i.e., $\A<\bar{\A}_N)$ the optimal input is composed of a single sphere of maximum radius $\A$. The value of $\bar{\A}_N$ is determined by numerically solving an integral equation. 
\subsection{Lower Bound on the Optimal Number of Spheres} 
Let us denote by $\K^\star$ the number of mass points in $P_{|\Xv^\star|}$.
\begin{thm}
	A lower bound on the number of mass points $\K^\star$ for $P_{|\Xv^\star|}$ is given by
	\begin{equation}
	\K^\star \ge \underline{\K} \triangleq \mleft\lceil \sqrt{\frac{\mleft(\A^2+2 e\mright)^2 + 8\pi (\N-1)}{8\pi e (\N+ \A^2/2)}} \ \mright\rceil.
	\end{equation}
\end{thm}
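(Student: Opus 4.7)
The strategy is to bracket $\C(\A)$ between a lower bound that does not involve $\K^\star$ and an upper bound that does, so that the stated inequality follows by rearrangement. The upper bound will exploit that $|\Xv^\star|$ is supported on only $\K^\star$ atoms (via a mixture entropy bound), while the lower bound will come from evaluating $\ell$ at a carefully chosen suboptimal input distribution together with the entropy power inequality.

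For the upper bound, I would start from the identity $\C(\A) + \log\lr{(2e)^\N \Gamma(\N)} = \ell(F_{\Xv^\star}) = \ent{|\Yv^\star|^2} + (\N-1)\expect{\log|\Yv^\star|^2}$. Since the optimal output PDF is the $\K^\star$-component mixture in~\eqref{eq:absYpdf}, the mixture entropy bound yields $\ent{|\Yv^\star|^2} \le \HH(|\Xv^\star|) + \expect{\ent{\chi^2_{2\N}(|\Xv^\star|^2)}} \le \log \K^\star + \expect{\ent{\chi^2_{2\N}(|\Xv^\star|^2)}}$. I would then bound each conditional chi-squared entropy via the Gaussian maximum-entropy principle, $\ent{\chi^2_{2\N}(\xi)} \le \tfrac12 \log\lr{2\pi e \cdot 4(\N+\xi)}$ (using $\mathrm{Var}(\chi^2_{2\N}(\xi)) = 4(\N+\xi)$), and apply Jensen's inequality to move the expectation inside the logarithm. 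An analogous Jensen step handles $(\N-1)\expect{\log|\Yv^\star|^2}\le(\N-1)\log(2\N + \expect{|\Xv^\star|^2})$. Controlling $\expect{|\Xv^\star|^2}$ via the amplitude constraint and simplifying gives an upper bound of the form $\C(\A) \le \log \K^\star + \tfrac12\log\lr{8\pi e(\N + \A^2/2)} + \text{(dimension-dependent constants)}$.

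For the lower bound, I would use the suboptimality inequality $\ell(F_{\Xv^\star}) \ge \ell(F)$ for any admissible $F$ and take $\Xv$ uniform on the complex ball of radius $\A$, for which $\ent{\Xv} = \log\lr{\pi^\N \A^{2\N}/\Gamma(\N+1)}$. The entropy power inequality applied to $\Yv = \Xv + \Wv$ in $\mathbb{R}^{2\N}$ gives $e^{\ent{\Yv}/\N} \ge \pi \A^2/\Gamma(\N+1)^{1/\N} + 2\pi e$. Translating this into a lower bound on $\ell(F_{\Xv^\star})$ and simplifying with standard bounds on $\Gamma(\N+1)^{1/\N}$ should produce $\C(\A) \ge \tfrac12 \log\lr{(\A^2 + 2e)^2 + 8\pi(\N-1)} - \text{(matching dimension-dependent constants)}$; the $(\A^2 + 2e)^2$ form emerges from squaring the EPI expression (which involves an $\N$-th-power term on the left-hand side) once constants are regrouped, and the additive $8\pi(\N-1)$ absorbs the $\N$-dependent slack coming from the dimensional correction to the scalar case.

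Combining the two bounds, the dimension-dependent constants should cancel, and squaring yields $(\K^\star)^2 \ge \frac{(\A^2+2e)^2 + 8\pi(\N-1)}{8\pi e(\N + \A^2/2)}$. Taking the ceiling of the square root finishes the proof since $\K^\star$ is a positive integer. The main obstacle I anticipate is matching the exact numerical constants—most notably the $\A^2/2$ (rather than $\A^2$) in the denominator and the precise $8\pi(\N-1)$ correction—which will require carrying $\expect{|\Xv^\star|^2}$ symbolically through the Jensen chain rather than bounding it by $\A^2$ too early, and performing analogously careful bookkeeping on the EPI side so that the constant terms from the two bounds cancel cleanly.
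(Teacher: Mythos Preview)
Your sandwich of $\C(\A)$ cannot deliver the stated constants; the mismatches are structural, not bookkeeping. The paper does not bracket capacity at all. It uses the chain $\log\K^\star\ge\HH\lr{|\X^\star|^2}\ge\mi{|\X^\star|^2}{|\Y^\star|^2}\ge\mi{|\X|^2}{|\Y|^2}$ and then lower-bounds this last \emph{scalar} mutual information for the \emph{single} suboptimal choice $|\X|^2\sim\mathcal{U}[0,\A^2]$. Because the same law appears on both sides of $\mi{|\X|^2}{|\Y|^2}=\ent{|\Y|^2}-\entcnd{|\Y|^2}{|\X|^2}$, the quantity $\expect{|\X|^2}=\A^2/2$ enters \emph{exactly} in the Jensen step on the conditional entropy, producing the denominator $8\pi e(\N+\A^2/2)$. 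In your upper bound the analogous step produces $\expect{|\Xv^\star|^2}$, an unknown in $[0,\A^2]$ with nothing to cancel against, since your lower bound uses a different (uniform-on-ball) input; carrying it symbolically cannot help.

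The numerator $(\A^2+2e)^2+8\pi(\N-1)$ is equally tied to the paper's specific decomposition: a \emph{one-dimensional} EPI on $|\Y|^2=\big||\X|+W_1\big|^2+\sum_{i=2}^{\N}|W_i|^2$ gives $\ent{|\Y|^2}\ge\tfrac12\log(\alpha+\beta)$ with $\alpha=\exp\lr{2\ent{\big||\X|+W_1\big|^2}}\ge(\A^2+2e)^2$ (via a further two-dimensional EPI on $|\X|e^{j\Phi}+W_1$ together with $\ent{|\X|^2}=\log\A^2$) and $\beta=\exp\lr{2\ent{\sum_{i=2}^{\N}|W_i|^2}}\ge 8\pi(\N-1)$. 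Your $2\N$-dimensional EPI on $\Yv=\Xv+\Wv$ instead yields $\ent{\Yv}\ge\N\log\lr{\pi\A^2/\Gamma(\N+1)^{1/\N}+2\pi e}$, whose $\N$-dependence is a multiplicative factor in front of the logarithm; no manipulation of $\Gamma(\N+1)^{1/\N}$ produces an additive $8\pi(\N-1)$ inside a single $\tfrac12\log(\cdot)$. On top of this, routing through $\ell$ forces you to carry $(\N-1)\expect{\log|\Yv|^2}$ evaluated at two different input laws, which does not cancel; the paper's scalar mutual information never contains this term.
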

\begin{proof}
	We have
	\begin{align}
	\log \K^\star = \log|\mathsf{supp}\lr{P_{|\X^\star |}}| &= \log|\mathsf{supp}\lr{P_{|\X^\star |^2}}| \\
	&\ge \HH\lr{|\X^\star|^2} \\
	&\ge \mi{|\X^\star|^2}{|\Y^\star|^2} \\
	&\ge \mi{|\X|^2}{|\Y|^2}, 
	\label{eq:mi_squared}
	\end{align}
	where the first two inequalities hold because $|\X^\star|$ is discrete and the last inequality from the suboptimal choice $|\X|^2 \sim {\cal U}[0,\A^2]$. Let us lower-bound the mutual information in \eqref{eq:mi_squared}: we have
	\begin{align}
	&\ent{|\Y|^2} = \ent{\big||\X|+W_1\big|^2+\sum_{i=2}^{\N} |W_i|^2} \\
	&\stackrel{(a)}{\ge} \frac{1}{2}  \log\mleft(\hspace{-0.6ex}\exp\mleft(\hspace{-0.3ex}2\ent{\big| |\X| + W_1\big|^2}\mright)\hspace{-.07cm}+\hspace{-.03cm}\exp\mleft(\hspace{-0.6ex}2\ent{\sum_{i=2}^{\N} |W_i|^2}\hspace{-.1cm}\mright) \hspace{-.1cm}\mright) \label{eq:epi} \\
	&\stackrel{(b)}{\ge} \frac{1}{2}\log \mleft(\exp\hspace{-.02cm}\mleft(2 \log\mleft(\A^2\hspace{-.05cm}+\hspace{-.05cm}2 e\mright)\mright)\hspace{-.02cm} + \exp\mleft(\log(8\pi (\N\hspace{-.07cm}-\hspace{-.07cm}1)\hspace{-.01cm})\hspace{-.02cm}\mright)\mright) \label{eq:entY}\\
	&= \frac{1}{2}\log\mleft(\mleft(\A^2+2 e\mright)^2 + 8\pi (\N-1)\mright), \label{eq:ent_unif}
	\end{align}
	where we used the EPI in step $(a)$, a lower bound on the entropy of a $\chi^2_{2\N-2}$ variate~\mbox{\cite[App.~C]{barletta2020capacity}} in step $(b)$ together with the following bound
	\begin{align}
	&\ent{\big| |\X| + W_1 \big|^2} = \ent{|\X|e^{j\Phi}+W_1} - \log(\pi)\\
	&\ge \log\mleft(\exp\mleft(\ent{|\X|e^{j\Phi}}\mright) + \exp\mleft(\ent{W_1}\mright)\mright) -\log(\pi) \\
	&= \log\mleft(\pi \exp\mleft(\ent{|\X|^2}\mright) + 2\pi e\mright) - \log(\pi) \\
	&= \log\mleft(\exp\mleft(\ent{|\X|^2}\mright) + 2 e\mright) \\
	&= \log\mleft(\A^2 + 2 e\mright),
	\end{align}
	where $\Phi \sim {\cal U}[0,2\pi)$ is independent of any other quantity, and in the last step we used $\ent{|\X|^2}=\log(\A^2)$.
	The conditional entropy is
	\begin{align}
	\entcnd{|\Y|^2}{|\X|^2} &\le \frac{1}{2}\expect{\log(8\pi e (\N+|\X|^2))} \label{eq:thm non central}\\
	&\le \frac{1}{2}\log \lr{8\pi e \lr{\N+\expect{|\X|^2}}} \label{eq:jensen}\\
	&= \frac{1}{2}\log(8\pi e (\N+ \A^2/2))\label{eq:entYgivenX},
	\end{align}
	where, in order, we have used the Gaussian maximum entropy principle, Jensen's inequality, and finally $\expect{|\X|^2}= \A^2/2$. Putting~\eqref{eq:mi_squared}, \eqref{eq:ent_unif}, and~\eqref{eq:entYgivenX} together, proves the claim.
\end{proof}

\subsection{On the Optimality of the Outer Sphere} \label{sS:point in A}

We prove that at any SNR level the sphere of maximum radius $\A$ is always part of the optimal input PMF. Let $i(\cdot \ ;F_{|\Xv|})$ be the information density with respect to the norm of the input vector
\begin{align} \label{eq:infdensity}
    i(\rho;F_{|\Xv|}) \triangleq &     \int_0^{\infty}\hspace{-.35cm} f_{\chi^2_{2\N}(\rho^2)}(\hspace{-.02cm}y\hspace{-.02cm}) \hspace{-.02cm}\log\hspace{-.02cm} \frac{y^{\N-1}}{\int_0^\A\hspace{-.15cm} f_{\chi^2_{2\N}(t^2)}(\hspace{-.02cm}y\hspace{-.02cm})dF_{|\Xv|}(t)}dy \hspace{-.05cm} \nonumber \\ &- \log\lr{(2e)^\N \Gamma( \N ) }.
\end{align}
\begin{lem}\label{Lem:derivative}
	The derivative of the information density is
	\begin{align}
	&i'(\rho;F_{|\X|}) =-2\rho\: \Exp \Bigg[ \Exp \Bigg[ \frac{1}{2}\frac{|\X|}{|\Y|} \frac{I_{\N-2}(|\X| |\Y|)}{I_{\N-1}(|\X| |\Y|)} \nonumber \\
	& \hphantom{i'(\rho;F_{\X}) =-2\rho\: \Exp \Bigg[ \Exp} - \lr{ \frac{1}{2}+\frac{\N-1}{|\Y|^2} } \bigg| |\Y|^2=Q' \Bigg] \Bigg]
	\end{align}
	where $Q'\sim \chi^2_{2(\N+1)}(\rho^2)$.
\end{lem}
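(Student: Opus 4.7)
The plan is to differentiate~\eqref{eq:infdensity} under the integral sign, swap the $\rho$-derivative of the noncentral chi-squared density for a $y$-derivative of a companion density with two additional degrees of freedom, integrate by parts in $y$, and recognize the resulting integrand as a conditional expectation. Bringing $\partial/\partial\rho$ inside the integral (justified by dominated convergence, since $f_{\chi^2_{2\N}(\rho^2)}(y)$ is smooth in $\rho$) gives
\begin{align*}
i'(\rho;F_{|\X|}) = \int_0^\infty \frac{\partial f_{\chi^2_{2\N}(\rho^2)}(y)}{\partial\rho}\,\log\frac{y^{\N-1}}{f_{|\Y|^2}(y)}\,dy,
\end{align*}
where $f_{|\Y|^2}(y)\triangleq\int_0^\A f_{\chi^2_{2\N}(t^2)}(y)\,dF_{|\X|}(t)$ is the marginal output density.

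The crux of the argument is the identity $\partial_\rho f_{\chi^2_{2\N}(\rho^2)}(y) = -2\rho\,\partial_y f_{\chi^2_{2(\N+1)}(\rho^2)}(y)$, which I would obtain by combining two standard recurrences: $\partial_\rho f_{\chi^2_{2\N}(\rho^2)} = \rho\bigl(f_{\chi^2_{2\N+2}(\rho^2)}-f_{\chi^2_{2\N}(\rho^2)}\bigr)$ and $\partial_y f_{\chi^2_{2\N+2}(\rho^2)} = \tfrac12\bigl(f_{\chi^2_{2\N}(\rho^2)}-f_{\chi^2_{2\N+2}(\rho^2)}\bigr)$. Both follow from the Poisson-mixture representation of the noncentral chi-squared together with the elementary fact $f_{\chi^2_{k+2}}(y) = (y/k)f_{\chi^2_k}(y)$. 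Substituting the identity into the previous display and integrating by parts in $y$, with boundary terms vanishing at $y=0$ (because $f_{\chi^2_{2\N+2}(\rho^2)}(0)=0$ for $\N\ge 1$) and at $y\to\infty$ (because the density decays like $e^{-(\sqrt{y}-\rho)^2/2}$ while the log factor grows at most linearly in $y$), yields
\begin{align*}
i'(\rho;F_{|\X|}) = 2\rho\int_0^\infty f_{\chi^2_{2(\N+1)}(\rho^2)}(y)\,\frac{d}{dy}\log\frac{y^{\N-1}}{f_{|\Y|^2}(y)}\,dy.
\end{align*}

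Finally, starting from the explicit density $f_{\chi^2_{2\N}(x^2)}(y)=\tfrac12 e^{-(y+x^2)/2}(y/x^2)^{(\N-1)/2}I_{\N-1}(x\sqrt{y})$ and the Bessel identity $I_{\N-1}'(z)=I_{\N-2}(z)-(\N-1)I_{\N-1}(z)/z$, a direct logarithmic differentiation gives $\partial_y\log f_{\chi^2_{2\N}(x^2)}(y) = -\tfrac12+\tfrac{x}{2\sqrt{y}}I_{\N-2}(x\sqrt{y})/I_{\N-1}(x\sqrt{y})$, in which the $(\N-1)/(2y)$ contributions cancel cleanly. Differentiating $f_{|\Y|^2}$ under the integral then identifies $d\log f_{|\Y|^2}(y)/dy$ as the posterior expectation of this quantity over $|\X|$ given $|\Y|^2=y$; subtracting $(\N-1)/y$ from $d\log y^{\N-1}/dy$, absorbing $\sqrt{y}$ into $|\Y|$ on the conditioning event, and reading the outer $y$-integral as an expectation over $Q'\sim\chi^2_{2(\N+1)}(\rho^2)$ recovers the claim. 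The main obstacle is the $\partial_\rho\leftrightarrow\partial_y$ swap in the second display: it is what produces the $-2\rho$ factor and enables the integration by parts; once in hand, the rest is routine Bessel algebra and boundary-term bookkeeping.
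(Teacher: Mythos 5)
Your argument is correct and arrives at exactly the paper's formula, but it is organized differently from the paper's proof. The paper deliberately avoids differentiating under the integral sign in $\rho$: it forms the finite difference $i(\rho_1;F_{|\X|})-i(\rho_2;F_{|\X|})$, integrates by parts in $y$ so that the difference of noncentral chi-squared \emph{densities} becomes a difference of \emph{CDFs}, normalizes by $\rho_1^2-\rho_2^2$ (the difference of the means) to define an auxiliary random variable $Q$, and only at the end lets the two radii coalesce, proving in a separate lemma, via the Poisson-mixture representation, that $f_Q\to f_{\chi^2_{2(\N+1)}(\rho^2)}$. You instead differentiate in $\rho$ first and trade the $\rho$-derivative for a $y$-derivative through the identity $\partial_\rho f_{\chi^2_{2\N}(\rho^2)}(y)=-2\rho\,\partial_y f_{\chi^2_{2(\N+1)}(\rho^2)}(y)$, then integrate by parts; that identity encodes precisely the same two recurrences the paper's appendix uses, so the two arguments rest on identical facts about the noncentral chi-squared family. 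From that point on the computations coincide: both identify $\frac{d}{dy}\log\bigl(f_{|\Y|^2}(y)/y^{\N-1}\bigr)$ as a posterior expectation given $|\Y|^2=y$, the paper via the density ratio $f_{\chi^2_{2(\N-1)}(\rho^2)}/f_{\chi^2_{2\N}(\rho^2)}$, you via logarithmic differentiation and $I_{\N-1}'(z)=I_{\N-2}(z)-(\N-1)I_{\N-1}(z)/z$ — equivalent Bessel algebra, including the cancellation of the $(\N-1)/(2y)$ terms. What your route buys is brevity and a one-line derivative formula; what it costs is that the analytic burden (the interchange of $\partial_\rho$ with the $y$-integral and the vanishing boundary terms) sits inside the main computation and is only sketched — your justifications are sound (the $\chi^2_{2\N+2}$ density vanishes at $y=0$ for $\N\ge1$, and the log factor grows at most linearly since $f_{|\Y|^2}(y)\ge c\,y^{\N-1}e^{-y/2}$ when $|\X|\le\A$), but they should be written out. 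The paper's finite-difference route confines all limiting issues to one clean density-convergence lemma and, as a by-product, gives an exact expression for $i(\rho_1;F_{|\X|})-i(\rho_2;F_{|\X|})$, whose sign alone already suffices for the monotonicity argument of Theorem~\ref{Thm:outer_shell}.
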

\begin{proof}
See the Appendix.
\end{proof}

\begin{thm}\label{Thm:outer_shell}
	Consider the constraint $|\X|\le \A$ and an input such that $P(|\X|>c)=0$, with $c<\A$. Then, the information density is strictly increasing for $\rho>c$, i.e.,
	\begin{equation}
	i'(\rho; F_{\X})\ge \frac{\rho}{1+\frac{c^2(2\N+\rho^2)}{4(\N-1/2)^2}}>0, \qquad \rho>c.
	\end{equation}
\end{thm}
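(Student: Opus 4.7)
The plan is to start from Lemma~\ref{Lem:derivative} and reduce the claim to a Bessel-function inequality that exploits the hypothesis $|\X|\le c$. First, I would use the recurrence $I_{\N-2}(z) = I_\N(z) + \tfrac{2(\N-1)}{z} I_{\N-1}(z)$ to rewrite the integrand in Lemma~\ref{Lem:derivative}; after substitution the two $\tfrac{\N-1}{|\Y|^2}$ contributions cancel and one obtains the compact identity
\begin{align*}
i'(\rho;F_{|\X|}) = \rho\left(1 - \Exp\!\left[\Exp\!\left[\tfrac{|\X|}{|\Y|}\tfrac{I_\N(|\X||\Y|)}{I_{\N-1}(|\X||\Y|)}\,\bigg|\,|\Y|^2=Q'\right]\right]\right).
\end{align*}

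Second, I would apply an Amos-type upper bound such as $\tfrac{I_\N(z)}{I_{\N-1}(z)} \le \tfrac{z}{(\N-\tfrac12)+\sqrt{(\N-\tfrac12)^2+z^2}}$ with $z=|\X||\Y|$. Since the resulting expression is monotone increasing in $|\X|^2$, the hypothesis $|\X|\le c$ allows me to replace $|\X|$ by $c$, giving a pointwise estimate that depends only on $c$ and $|\Y|^2$.

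Third, I would take the expectation over $Q'\sim\chi^2_{2(\N+1)}(\rho^2)$. The constant $2\N+\rho^2$ appearing in the target bound strongly suggests the pathwise decomposition $Q'=Q''+V$ with $Q''\sim\chi^2_{2\N}(\rho^2)$ and $V\sim\chi^2_2$ independent: because the post-Amos expression is decreasing in $|\Y|^2$ and $Q'\ge Q''$, the expectation is dominated by one over $Q''$, which has mean exactly $2\N+\rho^2$. Rationalizing $\tfrac{c^2}{a+\sqrt{a^2+c^2 q}} = \tfrac{\sqrt{a^2+c^2 q}-a}{q}$ with $a=\N-\tfrac12$, followed by a Jensen/AM--GM step applied in a favorable direction, should then deliver the announced lower bound.

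The main obstacle will be this final expectation step: the post-Amos expression is convex and decreasing in $|\Y|^2$, so a naive Jensen inequality goes the wrong way. The right move is to first rationalize and then linearize the surd $\sqrt{a^2+c^2 q}$ to obtain a form whose expectation is controllable by an elementary chi-squared moment and which yields the factor $c^2(2\N+\rho^2)/4(\N-\tfrac12)^2$ in the denominator. Strict positivity $i'(\rho)>0$ for $\rho>c$ is then immediate, since the resulting lower bound is positive for all $\rho>0$.
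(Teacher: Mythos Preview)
Your first step (applying the recurrence $I_{\N-2}=I_\N+\tfrac{2(\N-1)}{z}I_{\N-1}$ to collapse Lemma~\ref{Lem:derivative} into the one-ratio form) is correct and is exactly what the paper does. Your second step (Amos bound plus monotonicity in $|\X|$ to replace $|\X|$ by $c$) is also fine. The gap is in your third step: the coupling $Q'=Q''+V$ followed by ``rationalize and linearize the surd'' is not a workable plan for reaching the stated bound, and you seem aware of this since you flag that Jensen goes the wrong way and then only sketch a vague remedy. Concretely, after your steps you are left with $\Exp_{Q''}\bigl[c^2/(a+\sqrt{a^2+c^2Q''})\bigr]$, $a=\N-\tfrac12$, and you need this to be at most $c^2m/(4a^2+c^2m)$ with $m=2\N+\rho^2$. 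That pointwise inequality fails for small $q$ (for $q\to0$ the left side tends to $c^2/(2a)$, which can exceed $1$), so any ``linearization'' of $\sqrt{a^2+c^2q}$ either throws away too much or reintroduces an inverse moment $\Exp[1/Q'']$ that does not produce the target denominator. A further red flag is that your argument never uses the hypothesis $\rho>c$; if it worked, it would prove the bound for all $\rho>0$, which is not what the theorem asserts.

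The paper resolves this not by a coupling but by a \emph{measure change}: it rewrites the $\chi^2_{2(\N+1)}(\rho^2)$ expectation as a $\chi^2_{2\N}(\rho^2)$ expectation using the density ratio
\[
\frac{f_{\chi^2_{2(\N+1)}(\rho^2)}(t)}{f_{\chi^2_{2\N}(\rho^2)}(t)}=\frac{\sqrt{t}}{\rho}\,\frac{I_\N(\rho\sqrt{t})}{I_{\N-1}(\rho\sqrt{t})},
\]
which turns the single Bessel ratio into a \emph{product} $\tfrac{c}{\rho}\,\tfrac{I_\N(c\sqrt T)}{I_{\N-1}(c\sqrt T)}\,\tfrac{I_\N(\rho\sqrt T)}{I_{\N-1}(\rho\sqrt T)}$ with $T\sim\chi^2_{2\N}(\rho^2)$. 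Amos now bounds \emph{both} factors, and the hypothesis $\rho>c$ (through the monotonicity of $z\mapsto z^{-1}R_\N(z\sqrt T)$) collapses the product to $R_\N(c\sqrt T)^2\le \tfrac{c^2T}{4a^2+c^2T}$. The complement $1-\tfrac{c^2T}{4a^2+c^2T}=\tfrac{1}{1+c^2T/(4a^2)}$ is convex in $T$, so Jensen now goes the \emph{right} way and yields exactly the denominator $1+c^2(2\N+\rho^2)/(4a^2)$. This measure-change step is the idea your proposal is missing.
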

\begin{proof}
	We introduce the function
	\begin{equation}\label{eq:dec_s}
	s_{\N-1}(t) = \frac{1}{t} \frac{I_{\N-1}(t)}{I_{\N-2}(t)}, \qquad t>0.
	\end{equation}
	As shown in the proof of~\mbox{\cite[Lemma 2]{yu2011log}}, the function $s_{\N-1}$ is decreasing in $t$ for any $\N>3/2$, hence its derivative is negative. This is also true for $\N=1$. To prove it, we remark that $I_{-1}(t)$$\,=\,$$I_1(t)$ and, therefore, the derivative of $s_0(t)$ can be written as
	\begin{align} \label{eq:s0derivative}
	    s'_0(t) \hspace{-.05cm}=\hspace{-.05cm} \frac{1}{t}\lr{1\hspace{-.05cm}-\hspace{-.05cm}\frac{I_0^2(t)}{I_1^2(t)}} \hspace{-.05cm}=\hspace{-.05cm} \frac{1}{t} \lr{1\hspace{-.05cm}+\hspace{-.05cm}\frac{I_0(t)}{I_1(t)}} \lr{1\hspace{-.05cm}-\hspace{-.05cm}\frac{I_0(t)}{I_1(t)}}.
	\end{align}
    Since $I_1(t)<I_0(t)$, for $t\ge 0$, with \eqref{eq:s0derivative} we showed that $s'_0(t)<0$ for all $t>0$. 
    
    Using \eqref{eq:dec_s}, \eqref{eq:s0derivative}, and the fact that $P(|\X|>c)=0$, we can lower-bound the derivative of the information density as 
	\begin{align*}
	i'(\rho;F_{|\X|}) \ge -2\rho \: \expect{\frac{1}{2}\frac{c}{|\Y|} \frac{I_{\N-2}(c |\Y|)}{I_{\N-1}(c |\Y|)}-\frac{1}{2}-\frac{\N-1}{|\Y|^2} },
	\end{align*}
	where $|\Y|^2 \sim \chi^2_{2(\N+1)}(\rho^2)$. In the following, we use the relationship~\mbox{\cite[Eq.~9.6.26]{abramowitz1988}}
	\begin{equation}\label{eq:abramo}
	I_{\nu-1}(z)-I_{\nu+1}(z)=\frac{2\nu}{z}I_\nu(z).
	\end{equation}
	By assigning $\nu=\N-1$ and $z=c|\Y|$, we get
	\begin{align}
	&i'(\rho;F_{|\X|}) \nonumber \\ 
	& \hspace{-.05cm}\ge \hspace{-.07cm}-2\rho \: \expect{\frac{\N-1}{|\Y|^2} \frac{I_{\N}(c |\Y|)}{I_{\N-2}(c |\Y|)-I_{\N}(c |\Y|)}-\frac{1}{2}} \\
	&\hspace{-.05cm}\stackrel{(a)}{=}\hspace{-.07cm}-2\rho \: \expect{\frac{c}{2|\Y|} \frac{I_{\N}(c |\Y|)}{I_{\N-1}(c |\Y|)}-\frac{1}{2}} \\
	&\hspace{-.05cm}=\hspace{-.07cm}-\rho \mleft(\int_{0}^{\infty}\hspace{-.2cm}\frac{c}{2\sqrt{t}}e^{-\frac{t+\rho^2}{2}} \frac{t^\frac{\N}{2}}{\rho^\N} I_N(\rho\sqrt{t}) \frac{I_{\N}(c \sqrt{t})}{I_{\N-1}(c \sqrt{t})}dt\hspace{-.05cm}-\hspace{-.07cm}1\mright) \\
	&\hspace{-.05cm}=\hspace{-.07cm}-\rho\: \expect{\frac{c}{\rho}\frac{I_{\N}(c \sqrt{T})}{I_{\N-1}(c \sqrt{T})}\frac{I_{\N}(\rho \sqrt{T})}{I_{\N-1}(\rho \sqrt{T})}-1}, \label{eq:der}
	\end{align}
	where in step $(a)$ we used again \eqref{eq:abramo} and in the last equality we rearranged the integral by using $T\sim \chi_{2\N}^2(\rho^2)$. We introduce the inequality from~\mbox{\cite[Eq.~(16)]{amos1974}}
	\begin{equation} \label{eq:amosineq}
	\frac{I_{\nu+1}(z)}{I_{\nu}(z)}\le R_{\nu+1}(z) \triangleq \frac{z}{\nu+1/2+\sqrt{(\nu+1/2)^2+z^2}},
	\end{equation}
	where $\nu\ge 0, \, z\ge 0$. By plugging~\eqref{eq:amosineq}
	into \eqref{eq:der}, we get
	\begin{align}
	i'(\rho;F_{|\X|}) &\ge -\rho\: \expect{cR_N\lr{c\sqrt{T}}\frac{1}{\rho}R_N\lr{\rho\sqrt{T}}-1} \\
	&\stackrel{(a)}{\ge} -\rho\: \expect{R_N\lr{c\sqrt{T}}R_N\lr{c\sqrt{T}}-1} \\
	&\stackrel{(b)}{\ge} -\rho\: \expect{\frac{c^2 T}{4(\N-\frac{1}{2})^2+c^2T}-1} \\
	&=\expect{\frac{\rho}{1+\frac{c^2T}{4(\N-\frac{1}{2})^2}}} \\
	&\stackrel{(c)}{\ge} \frac{\rho}{1+\frac{c^2\expect{T}}{4(\N-\frac{1}{2})^2}} =\frac{\rho}{1+\frac{c^2(2\N+\rho^2)}{4(\N-\frac{1}{2})^2}},
	\end{align}
	where in step $(a)$ we used $\rho>c$, in step $(b)$ we used  $T\ge 0$ in the denominator, in step $(c)$ Jensen's inequality, and, finally, $\expect{T}=2\N+\rho^2$.
\end{proof}

As an immediate consequence of Thm.~\ref{Thm:outer_shell}, the Karush– Kuhn–Tucker conditions~\mbox{\cite[Lemma~7]{DytsoShells}}, and the fact that the optimal distribution is discrete, we have the following:
\begin{cor} \label{cor: rho_A}
	The sphere with radius $\rho=\A$ is always part of the capacity achieving distribution, i.e., $P_{|\X^\star |}(\A)>0$.
\end{cor}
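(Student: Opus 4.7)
The plan is to argue by contradiction, combining the monotonicity of the information density established in Theorem~\ref{Thm:outer_shell} with the necessary Karush--Kuhn--Tucker (KKT) optimality conditions for the constrained capacity problem. Suppose, for contradiction, that $P_{|\X^\star|}(\A) = 0$. Since the results recalled from~\cite{Rassouli} and Section~\ref{S:sysmod} guarantee that $P_{|\X^\star|}$ is supported on a finite discrete set of radii contained in $[0,\A]$, there exists a largest support point $c \triangleq \max \mathsf{supp}(P_{|\X^\star|})$ with $c < \A$, so that $P(|\X^\star|>c)=0$.

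Next, I would invoke the KKT characterization~\cite[Lemma~7]{DytsoShells}, which states that an input $F^\star$ is capacity-achieving if and only if $i(\rho;F^\star) = \C(\A)$ for every $\rho$ in the support of $F^\star$ and $i(\rho;F^\star) \le \C(\A)$ for every $\rho \in [0,\A]$. In particular, since $c$ is an active mass point, one has $i(c;F_{|\X^\star|}) = \C(\A)$, while the feasibility of $\rho = \A$ forces $i(\A;F_{|\X^\star|}) \le \C(\A)$.

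Finally, I would apply Theorem~\ref{Thm:outer_shell} to $F^\star$ with the very value $c$ chosen above: the hypothesis $P(|\X^\star|>c)=0$ is exactly what Theorem~\ref{Thm:outer_shell} requires, so
\begin{equation*}
i'(\rho;F_{|\X^\star|}) \ge \frac{\rho}{1+\frac{c^2(2\N+\rho^2)}{4(\N-1/2)^2}} > 0, \qquad \rho \in (c,\A].
\end{equation*}
Integrating from $c$ to $\A$ yields $i(\A;F_{|\X^\star|}) > i(c;F_{|\X^\star|}) = \C(\A)$, contradicting the KKT upper bound. Hence the assumption $P_{|\X^\star|}(\A)=0$ is untenable, and the sphere of radius $\A$ must carry positive probability.

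There is no real obstacle beyond correctly assembling the pieces; the delicate point is simply to pick $c$ as the \emph{largest} support radius of the hypothetical counterexample, so that $c$ is simultaneously KKT-active (giving the equality $i(c;F^\star)=\C(\A)$) and satisfies the cut-off hypothesis $P(|\X|>c)=0$ demanded by Theorem~\ref{Thm:outer_shell}. All the analytical difficulty has already been absorbed into Theorem~\ref{Thm:outer_shell}.
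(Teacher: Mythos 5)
Your argument is correct and is exactly the one the paper intends: the paper states Corollary~\ref{cor: rho_A} as an immediate consequence of Theorem~\ref{Thm:outer_shell}, the KKT conditions of~\cite[Lemma~7]{DytsoShells}, and the discreteness of the optimal input, and your proof simply spells out that chain (largest support point $c<\A$, KKT equality at $c$ and inequality at $\A$, strict increase of $i(\cdot\,;F_{|\X^\star|})$ on $(c,\A]$ yielding the contradiction). No gaps; the choice of $c$ as the largest support radius is precisely the step the paper leaves implicit.
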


\section{Iterative Evaluation of the Capacity} \label{S:ShellsC}
The numerical evaluation of the channel capacity comprises two nested iterative procedures. 
Given a PMF $P_{|\Xv |}$ with a fixed number of spheres $\K$ and their radii $\{\rho_i\}$, the inner layer iteratively updates the probabilities $\{ p_i \}$ until the corresponding mutual information is maximized. Since the tentative $\K$ and $\{ \rho_i \}$ might not be optimal, the resulting mutual information is a lower bound on the channel capacity.

The outer layer iteratively updates $\K$ and $\{ \rho_i \}$, runs the inner layer, and computes an upper bound on the channel capacity relying on the dual capacity expression based on the informational divergence~\cite{csiszar2011information}. The iterative algorithm of the outer layer continues until the lower bound from the inner layer and the upper bound converge within a given tolerance.

The inner layer is a variant of the Blahut-Arimoto algorithm: compared to its original version, it allows for a continuous output distribution and a continuous scalar input search space. The outer layer is analogous to the dynamic assignment Blahut-Arimoto algorithm of~\cite{Farsad}, using the mentioned variant of the Blahut-Arimoto algorithm.

\subsection{Inner Layer: Probabilities Update}
Given the number of spheres $\K$, their radii $\{\rho_i\}$, and probabilities $\{p_i\}$, we have
\begin{align} 
\ell(P_{|\Xv|}) = \sum_{i=1}^{\K} p_i \int_{0}^{\infty} \hspace{-0.25cm} f_{\chi_{2\N}^2(\rho^2_i)}(y) \log\mleft(\frac{y^{\N-1}}{r(y,\{p_j\})}\mright) dy, \label{eq:itercapacity}
\end{align}
where 
\begin{align}
r(y,\{p_j\}) = \sum_{j=1}^{\K} p_j f_{\chi_{2\N}^2(\rho^2_j)}(y).
\end{align}
Let us rewrite the maximization of~\eqref{eq:itercapacity} over $\{p_i\}$ as follows
\begin{align}
&\max_{\{p_i\}} \ell(P_{|\Xv|}) = \max_{\{p_i\}}  \sum_{i=1}^{\K} p_i \Bigg\{ \int_{0}^{\infty} f_{\chi_{2\N}^2(\rho^2_i)}(y) \nonumber \\
& \ \bigg[ \log \frac{p_i f_{\chi_{2\N}^2(\rho^2_i)}(y)y^{\N-1}}{r(y,\{p_j\})} -\log(f_{\chi_{2\N}^2(\rho^2_i)}(y))\bigg]dy -\log(p_i)\Bigg\} \\
&=\max_{\{p_i\}} \max_{r_i(y)}  \Ell(\{p_i\},\{r_i(y)\}) \label{eq:doublemax}
\end{align}
with
\begin{align}
&\Ell(\{p_i\},\{r_i(y)\}) \triangleq \nonumber \\
& \ \sum_{i=1}^{\K} p_i \Bigg\{\int_{0}^{\infty} f_{\chi_{2\N}^2(\rho^2_i)}(y) \log\frac{ r_i(y) y^{\N-1}}{f_{\chi_{2\N}^2(\rho^2_i)}(y)}dy -\log(p_i) \Bigg\},
\end{align}
where, for any given $y\in\mathbb{R}^{+}$, $\{r_i(y)\}_{i=1}^\K$ is a valid PMF. This is a test PMF that allows us to write the capacity as a nested maximization problem. Specifically, the last step of~\eqref{eq:doublemax} holds thanks to the nonnegativity of the \mbox{Kullback-Leibler} divergence between $\{ p_i f_{\chi_{2\N}^2(\rho^2_i)}(y)/ r(y,\{p_j\}) \}$ and $\{ r_i(y) \}$.

We leverage the double maximization to devise the iterative procedure described in~Algorithm~\ref{alg:BA-like}.
\begin{figure}[!t]
 \removelatexerror
  \begin{algorithm}[H]
  \caption{Inner Layer}
  \label{alg:BA-like}
\begin{algorithmic}[1]
    \Procedure{Algorithm1}{$\lr{P_{|\Xv|}}$} \vspace*{0.35ex}
    \State {$\varepsilon_1$} \hskip2ex \Comment{Desired tolerance in evaluating \eqref{eq:doublemax}}
    \State {$M$} \hskip2ex \Comment{Number of iterations within $\varepsilon_1$ for convergence}
    \State $q \leftarrow 0$ \hskip2ex \Comment{Iteration index} \vspace*{0.2ex}
    \State $\K \gets \mleft|\mathsf{supp} \lr{P_{|\Xv|}}\mright| $
    \Repeat
    \State $q \gets q + 1$ \vspace*{0.2ex}
    \State $p_i \gets P_{|\Xv|}\lr{\rho_i}, \ \forall i=1,\dots,\K$
    \For{$i \gets 1, \dots , \K$}
        \State \hspace*{-2ex} \Comment{Set optimal PMF of inner maximization  in~\eqref{eq:doublemax} as} \vspace*{-1ex}
        \State $\vcenter{\begin{flalign}
        \hspace*{-1.5ex} r_i(y) \gets \frac{p_i f_{\chi_{2\N}^2(\rho^2_i)}(y)}{\sum_{j=1}^{\K} p_j f_{\chi_{2\N}^2(\rho^2_j)}(y)}, \quad \forall y\in\mathbb{R}^{+}. && \hspace*{-12.6ex}
        \end{flalign}}$ \vspace*{-0.75ex}
        \State \hspace*{-2ex} \Comment{Set optimal PMF of outer maximization in~\eqref{eq:doublemax} as} \vspace*{-.75ex}
        \State $\vcenter{\begin{flalign*}
            \hspace*{-1.3ex} p'_i \gets  \exp \lr{ \int_{0}^{\infty} \hspace*{-.35cm} f_{\chi_{2\N}^2(\rho^2_i)}(y) \log\frac{r_i(y) y^{\N-1}}{f_{\chi_{2\N}^2(\rho^2_i)}(y)}dy }, &&
        \end{flalign*}}$ \vspace*{-.75ex}
    \EndFor
    \State $P_{|\Xv|}\lr{\rho_i} \gets p'_i/ \sum_{j=1}^\K p'_j, \quad \forall i=1,\ldots,\K$ \vspace*{.5ex}
    \State $l(q) \gets \Ell\lr{\{P_{|\Xv|}\lr{\rho_i} \},\{r_i(y)\}}$ \vspace*{0.5ex}
    \Until{$q>M \textsf{\textbf{ and }} |l(q)-l(q-m) | < \varepsilon_1, \ \forall m=1,\dots,M$} \vspace*{-0.75ex}
    \State \Comment{Evaluate the Lower Bound on Channel Capacity} \vspace*{-1ex}
    \State $\vcenter{\begin{flalign}
    \underline{\C}(\A) \gets l(q) - \log \lr{ (2e)^\N\Gamma(\N) }. && \hspace{-6.9ex}
    \end{flalign}}$ \vspace*{-1ex}
    \State \Return {$P_{|\Xv|},\underline{\C}(\A)$} \vspace*{0.3ex}
    \EndProcedure
\end{algorithmic}
\end{algorithm}
\vspace*{-6ex}
\end{figure}
\subsection{Outer Layer: PMF Evolution}
In Corollary~\ref{cor: rho_A}, we proved that $\rho_i = \A$ is always an optimal mass point and in~\mbox{\cite[Thm.~2]{Dytso1shell}} the authors proved that it is the only mass point at low SNR. As the SNR increases, a new mass point, hence a second sphere, can be added. 
In~\mbox{\cite[Thm.~4]{Dytso1shell}}, they prove that the second sphere arises in $\rho_2=0$.
Via numerical evaluation we notice that, by slightly increasing the SNR, $i(0;P_{|\Xv|})$ eventually grows larger than any $i( \rho;P_{|\Xv|})$. This behavior remains consistent for any $\K$. Therefore, we conjecture that for a sufficiently small increment in SNR, at most a single mass point is added to the support and that this mass point arises in $\rho_{\K+1}=0$. Moreover, we reduce the computational burden by exploiting the results of Sec.~\ref{S:nAMP}, i.e., $\K^\star \ge \underline{\K}$ and $\rho_1 = \A$. 
Let us consider $\K>1$. Given a tentative input PMF, the derivative of the mutual information with respect to $\rho^2_i$ is
\begin{align}
\frac{\partial}{\partial \rho^2_i} \I =& \int_{0}^{\infty} \frac{p_i}{2} \Big( f_{\chi_{2\N+2}^2(\rho^2_i)}(y) - f_{\chi_{2\N}^2(\rho^2_i)}(y) \Big) \nonumber \\
&\lr{\log\mleft(r_i(y) y^{\N-1}\mright)-\log \lr{f_{\chi_{2\N}^2(\rho^2_i)}(y)}}dy. \label{eq:derivatives}
\end{align}
\begin{figure}[!t]
 \removelatexerror
  \begin{algorithm}[H] 
  \caption{Outer Layer}
  \label{alg:2}
\begin{algorithmic}[1]
    \Procedure{Algorithm2}{$\lr{P_{|\Xv|}}$} \vspace*{0.35ex}
    \State {$\varepsilon_2$} \hskip2ex \Comment{Desired precision in Capacity estimation}
    \State $\K \gets \mleft|\mathsf{supp} \lr{P_{|\Xv|}}\mright| $
    \Repeat 
    \State \Comment{Evaluate PMF probabilities and Lower Bound} \vspace*{0.2ex}
    \State {$\lrs{P_{|\Xv|},\underline{\C}(\A)} \gets \textsc{\textrm{Algorithm1}}\lr{P_{|\Xv|}}$} \vspace*{0.35ex}
    \If {$i \lr{ 0;P_{|\Xv|} } \ge i \lr{ \rho;P_{|\Xv|} }, \ \forall \rho \in (0,\A] $}
    \vspace*{.45ex}
        \State \Comment{Add new mass point in 0 and reinitialize PMF}
        \State $\K \gets \K+1$
        \State $\rho_\K \gets 0$
        \State $P_{|\Xv|}\lr{\rho_i} \gets 1/\K , \ \forall i=1,\dots,\K$ \vspace*{0.35ex}
        \State \Comment{Update PMF probabilities and Lower Bound} 
        \State {$\lrs{P_{|\Xv|},\underline{\C}(\A)} \gets
        \textsc{\textrm{Algorithm1}}\lr{ P_{|\Xv|}}$} \vspace*{0.34ex}
        \EndIf
        \State \Comment{Evaluate the Capacity Upper Bound} \vspace*{-0.75ex}
        \State $\vcenter{\begin{flalign} \label{eq:objfunc}
        \overline{\C}(\A) \gets \max_{\rho}\: i \lr{\rho;P_{|\Xv|}}, \ \forall \rho \in (0,\A] && \hspace*{-9.75ex}
        \end{flalign}}$ \vspace*{-1ex}
        \State \Comment{By using~\eqref{eq:derivatives}, update $\lrc{\rho_j}$ as}
        \State $\vcenter{\begin{flalign}
        \rho_j \gets \sqrt{ \lr{ \rho_j }^2 + \mu \frac{\partial}{\partial \lr{ \rho_j }^2} \I}, \ \forall j=2, \dots, \K, &&
        \hspace*{-9.75ex} \end{flalign}}$
        \hspace*{5.5ex} where $\mu$ is the gradient ascend step size \vspace*{0.75ex}
    \Until{$\overline{\C}(\A) - \underline{\C}(\A) < \varepsilon_2$}
    \State $\hat{P}_{|\Xv|} \gets P_{|\Xv|}$
    \State $\hat{\C}(\A) \gets \underline{\C}(\A) \ $ \Comment{Lower bound as capacity estimate}
    \vspace*{0.5ex}
    \State \Return {$\hat{P}_{|\Xv|},\hat{\C}(\A)$}
    \EndProcedure
\end{algorithmic}
\end{algorithm}
\vspace*{-5ex}
\end{figure}
The gradient formed by the derivatives in \eqref{eq:derivatives} can be used in a gradient ascend algorithm to iteratively update $\{ \rho_i \}$. The overall procedure carried out by the outer layer is reported in Algorithm~\ref{alg:2}, which takes as input a PMF $P_{|\Xv|}$ with $\K = \underline{\K}$, $\{\rho_i\}$ evenly spaced between $0$ and $\A$, and $p_i=1/\underline{\K}$ for all $i=1,\dots,\underline{\K}$. We remark that Algorithm~\ref{alg:2} converges when the information density in~\eqref{eq:objfunc} is concave, differentiable, and its gradient is Lipschitz continuous. In the case of AWGN amplitude-constrained channels, these properties are guaranteed by~\mbox{\cite[Thm.~7]{Alg2convergence}} and~\mbox{\cite[Cor.~4]{Alg2convergence}}.

\subsection{Numerical Results} \label{sS:NumResults}
As an example, we evaluate the channel capacity for $\N$$\,=\,$$2$ and we compare it to the capacity bounds from~\cite{thangaraj2017capacity}. The computed $\hat{\C}(\A)$ with precision $\varepsilon_2$$\,=\,$$10^{-2}$ is shown in Fig.~\ref{fig:capcompThang}, where the numerical upper bound in~\cite{thangaraj2017capacity} closely matches the capacity $\hat{\C}(\A)$, within $\sim 0.1 \text{ bit per channel use (bpcu)}$.
\begin{figure}[t]
    \centering
    \definecolor{mycolor1}{rgb}{0.00000,0.44700,0.74100}%

\begin{tikzpicture}

\begin{axis}[%
width=0.85\linewidth,
height=0.7\linewidth,
at={(0pt,0pt)},
scale only axis,
xmin=-5,
xmax=30,
xlabel style={font=\color{white!15!black}},
xlabel={SNR [dB]},
ymin=0,
ymax=20,
ylabel style={font=\color{white!15!black}},
ylabel={Channel Capacity [bpcu]},
axis background/.style={fill=white},
ylabel shift = -3 pt,
xlabel shift = 0 pt,
xmajorgrids,
ymajorgrids,
legend style={font=\footnotesize, at={(0.025,0.965)}, anchor=north west, legend cell align=left, align=left, draw=white!15!black}
]
\addplot [color=black, line width=1.0pt]
  table[row sep=crcr]{%
-5	0.883429147161471\\
-3.15789473684211	1.21872675358513\\
-1.31578947368421	1.65501826082332\\
0.526315789473684	2.20229718435949\\
2.36842105263158	2.85588044939158\\
4.21052631578947	3.60117486340958\\
6.05263157894737	4.42121895853924\\
7.89473684210526	5.30039872570223\\
9.73684210526316	6.22557392935572\\
11.5789473684211	7.18615614522837\\
13.4210526315789	8.17467955820727\\
15.2631578947368	9.19720557844304\\
17.1052631578947	10.2533970698507\\
18.9473684210526	11.3373197567778\\
20.7894736842105	12.4461269643108\\
22.6315789473684	13.573949758145\\
24.4736842105263	14.7202799192966\\
26.3157894736842	15.8797882808005\\
28.1578947368421	17.0507618901881\\
30	18.2324422259277\\
};
\addlegendentry{ Upper Bound from \cite[Eq. (82)]{thangaraj2017capacity}}

\addplot [color=red, dashed, line width=1.5pt]
  table[row sep=crcr]{%
-5	0.791910394266937\\
-3.15789473684211	1.13430469361016\\
-1.31578947368421	1.58516432659703\\
0.526315789473684	2.14978525160798\\
2.36842105263158	2.81880874677174\\
4.21052631578947	3.57059542786861\\
6.05263157894737	4.38379432560111\\
7.89473684210526	5.25090441484177\\
9.73684210526316	6.1659578723744\\
11.5789473684211	7.12518773342173\\
13.4210526315789	8.12382959677494\\
15.2631578947368	9.15749531074436\\
17.1052631578947	10.2218011791161\\
18.9473684210526	11.3125947949053\\
20.7894736842105	12.4260133201573\\
22.6315789473684	13.5586575839393\\
24.4736842105263	14.7074130949544\\
26.3157894736842	15.8697057507126\\
28.1578947368421	17.0431611256071\\
30	18.2258225329622\\
};
\addlegendentry{ $\overline{\C}(\A)$}

\addplot [color=blue, line width=1.0pt, only marks, mark size=1.5pt, mark=*, mark options={solid, mycolor1}]
  table[row sep=crcr]{%
-5	0.791910394266937\\
-3.15789473684211	1.13430469361016\\
-1.31578947368421	1.58516432659703\\
0.526315789473684	2.14978525160798\\
2.36842105263158	2.81880874677174\\
4.21052631578947	3.57059542786861\\
6.05263157894737	4.38379431776147\\
7.89473684210526	5.24949062389456\\
9.73684210526316	6.16457251274663\\
11.5789473684211	7.12375632287035\\
13.4210526315789	8.12241573717304\\
15.2631578947368	9.15607827247074\\
17.1052631578947	10.2203729358227\\
18.9473684210526	11.3111587586078\\
20.7894736842105	12.4245991732913\\
22.6315789473684	13.557232925907\\
24.4736842105263	14.7060114296816\\
26.3157894736842	15.8682641145174\\
28.1578947368421	17.0417185740284\\
30	18.2243855881108\\
};
\addlegendentry{ $\underline{\C}(\A)$}

\addplot [color=mycolor1, dotted, line width=1.5pt]
  table[row sep=crcr]{%
-5	0.43947253822714\\
-3.15789473684211	0.647173806504281\\
-1.31578947368421	0.938260116574028\\
0.526315789473684	1.33313021264311\\
2.36842105263158	1.84828827749303\\
4.21052631578947	2.49189689555953\\
6.05263157894737	3.26122654373716\\
7.89473684210526	4.14353390971204\\
9.73684210526316	5.11982237576971\\
11.5789473684211	6.16938066538407\\
13.4210526315789	7.27321668881948\\
15.2631578947368	8.41575350927211\\
17.1052631578947	9.58513675099946\\
18.9473684210526	10.7727893732969\\
20.7894736842105	11.9727126333063\\
22.6315789473684	13.18080498219\\
24.4736842105263	14.3943038794849\\
26.3157894736842	15.611367039731\\
28.1578947368421	16.8307738870142\\
30	18.0517192173295\\
};
\addlegendentry{ Lower Bound from \cite[Eq. (78)]{thangaraj2017capacity}}

\end{axis}

\end{tikzpicture}%
    \caption{Channel Capacity bounds and estimate for $\N=2$, with tolerance $\varepsilon_2 = 10^{-2}$.}
    \label{fig:capcompThang}
\end{figure}
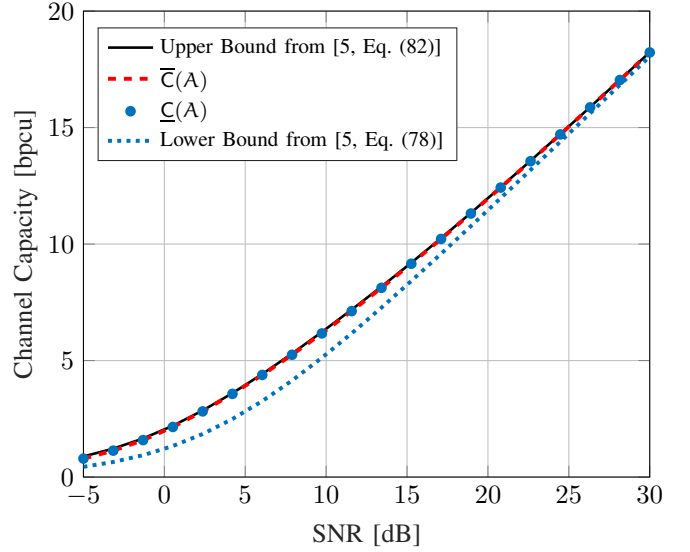
\begin{figure}[t]
    \centering
    \input{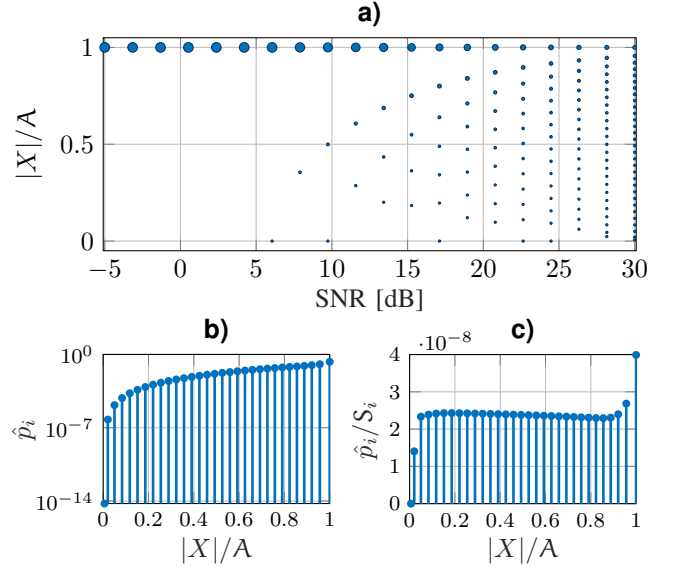}
    \caption{For $\N=2$ and tolerance $\varepsilon_2 = 10^{-2}$: \textsf{\textbf{a)}} Evolution of the normalized radii $\{\hat{\rho}_i/\A\}$ vs SNR, \textsf{\textbf{b)}} Estimated PMF $\hat{P}_{|\Xv|}$ at $\text{SNR} = 30 \text{ dB} $, and \textsf{\textbf{c)}} PDF $\hat{P}_{|\Xv|}/\Ss_i$ at $\text{SNR} = 30 \text{ dB} $.}
    \label{fig:PMFev30dB}
\end{figure}
The estimated optimal PMF $\hat{P}_{|\Xv|}$ is characterized by $\hat{\K}$, $\{\hat{p}_i\}$, and $\{\hat{\rho}_i\}$. In Fig.~\ref{fig:PMFev30dB}, the input amplitudes $|\Xv|$ are normalized to $|\Xv|/\A$ for the sake of compact illustration. In Fig.~\ref{fig:PMFev30dB}a, we show the evolution of $\{\hat{\rho}_i/\A\}$ versus the SNR, where the area of the circles is proportional to the associated $\hat{p}_i$. In Fig.~\ref{fig:PMFev30dB}b, we show the optimal PMF estimate $\hat{P}_{|\Xv|}$, for $\text{SNR}$$\,=\,$$30 \text{ dB}$. We notice that the PMF mass points tend to be evenly spaced as the SNR increases. Finally, in Fig.~\ref{fig:PMFev30dB}c we show the PDF $\hat{P}_{|\Xv|}/\Ss_i$ of a point on the $i$th hyper-sphere, again for $\text{SNR}$$\,=\,$$30 \text{ dB}$. We notice that, as the SNR increases, this PDF tends to become uniformly distributed, aside from some edge effects due to the constraint $|\Xv|<\A$.

\section{Conclusion} \label{S:conclusion}
We derived fundamental properties of the capacity-achieving input distribution for the multiple-input multiple-output additive white Gaussian noise channels subject to peak amplitude constraint. By using these insights, we also proposed an iterative procedure to numerically evaluate the optimal input distribution and consequently the information channel capacity.

\bibliographystyle{IEEEtran}
\bibliography{bibliofile}

\begin{thebibliography}{10}
\providecommand{\url}[1]{#1}
\csname url@samestyle\endcsname
\providecommand{\newblock}{\relax}
\providecommand{\bibinfo}[2]{#2}
\providecommand{\BIBentrySTDinterwordspacing}{\spaceskip=0pt\relax}
\providecommand{\BIBentryALTinterwordstretchfactor}{4}
\providecommand{\BIBentryALTinterwordspacing}{\spaceskip=\fontdimen2\font plus
\BIBentryALTinterwordstretchfactor\fontdimen3\font minus
  \fontdimen4\font\relax}
\providecommand{\BIBforeignlanguage}[2]{{%
\expandafter\ifx\csname l@#1\endcsname\relax
\typeout{** WARNING: IEEEtran.bst: No hyphenation pattern has been}%
\typeout{** loaded for the language `#1'. Using the pattern for}%
\typeout{** the default language instead.}%
\else
\language=\csname l@#1\endcsname
\fi
#2}}
\providecommand{\BIBdecl}{\relax}
\BIBdecl

\bibitem{Smith}
J.~G. Smith, ``The information capacity of amplitude- and variance-constrained
  scalar {G}aussian channels,'' \emph{Information and Control}, vol.~18, no.~3,
  pp. 203--219, April 1971.

\bibitem{Shamai}
S.~{Shamai} and I.~{Bar-David}, ``The capacity of average and
  peak-power-limited quadrature {G}aussian channels,'' \emph{IEEE Transactions
  on Information Theory}, vol.~41, no.~4, pp. 1060--1071, July 1995.

\bibitem{Rassouli}
B.~{Rassouli} and B.~{Clerckx}, ``On the capacity of vector {G}aussian channels
  with bounded inputs,'' \emph{IEEE Transactions on Information Theory},
  vol.~62, no.~12, pp. 6884--6903, December 2016.

\bibitem{DytsoShells}
A.~{Dytso}, S.~{Yagli}, H.~V. {Poor}, and S.~{Shamai Shitz}, ``The capacity
  achieving distribution for the amplitude constrained additive {G}aussian
  channel: An upper bound on the number of mass points,'' \emph{IEEE
  Transactions on Information Theory}, vol.~66, no.~4, pp. 2006--2022, April
  2020.

\bibitem{thangaraj2017capacity}
A.~{Thangaraj}, G.~{Kramer}, and G.~{Böcherer}, ``Capacity bounds for
  discrete-time, amplitude-constrained, additive white {G}aussian noise
  channels,'' \emph{IEEE Transactions on Information Theory}, vol.~63, no.~7,
  pp. 4172--4182, July 2017.

\bibitem{Dytso}
A.~{Dytso}, M.~{Goldenbaum}, S.~{Shamai}, and H.~V. {Poor}, ``Upper and lower
  bounds on the capacity of amplitude-constrained {MIMO} channels,'' in
  \emph{GLOBECOM 2017 - 2017 IEEE Global Communications Conference}, December
  2017, pp. 1--6.

\bibitem{ourISIT2020}
A.~{Favano}, M.~{Ferrari}, M.~{Magarini}, and L.~{Barletta}, ``Capacity bounds
  for amplitude-constrained {AWGN} {MIMO} channels with fading,'' in \emph{2020
  IEEE International Symposium on Information Theory (ISIT)}, August 2020, pp.
  2032--2037.

\bibitem{Farsad}
N.~{Farsad}, W.~{Chuang}, A.~{Goldsmith}, C.~{Komninakis}, M.~{Médard},
  C.~{Rose}, L.~{Vandenberghe}, E.~E. {Wesel}, and R.~D. {Wesel}, ``Capacities
  and optimal input distributions for particle-intensity channels,'' \emph{IEEE
  Transactions on Molecular, Biological and Multi-Scale Communications},
  vol.~6, no.~3, pp. 220--232, November 2020.

\bibitem{Dytso1shell}
A.~{Dytso}, M.~{Al}, H.~V. {Poor}, and S.~{Shamai Shitz}, ``On the capacity of
  the peak power constrained vector {G}aussian channel: An estimation theoretic
  perspective,'' \emph{IEEE Transactions on Information Theory}, vol.~65,
  no.~6, pp. 3907--3921, January 2019.

\bibitem{LapidothMoser}
A.~{Lapidoth} and S.~M. {Moser}, ``Capacity bounds via duality with
  applications to multiple-antenna systems on flat-fading channels,''
  \emph{IEEE Transactions on Information Theory}, vol.~49, no.~10, pp.
  2426--2467, October 2003.

\bibitem{barletta2020capacity}
L.~Barletta and S.~Rini, ``On the capacity of the oversampled {W}iener phase
  noise channel,'' \emph{arXiv preprint arXiv:2001.07485}, January 2020.

\bibitem{yu2011log}
Y.~Yu, ``On log-concavity of the generalized {M}arcum {Q} function,''
  \emph{arXiv preprint arXiv:1105.5762}, May 2011.

\bibitem{abramowitz1988}
M.~Abramowitz and I.~A. Stegun, \emph{Handbook of Mathematical Functions with
  Formulas, Graphs, and Mathematical Tables}.\hskip 1em plus 0.5em minus
  0.4em\relax US Government printing office, December 1965, vol.~55.

\bibitem{amos1974}
D.~E. Amos, ``Computation of modified {B}essel functions and their ratios,''
  \emph{Mathematics of Computation}, vol.~28, no. 125, pp. 239--251, January
  1974.

\bibitem{csiszar2011information}
I.~Csisz{\'a}r and J.~K{\"o}rner, \emph{Information theory: coding theorems for
  discrete memoryless systems}.\hskip 1em plus 0.5em minus 0.4em\relax
  Cambridge University Press, 2011.

\bibitem{Alg2convergence}
Y.~{Wu} and S.~{Verdu}, ``Functional properties of minimum mean-square error
  and mutual information,'' \emph{IEEE Transactions on Information Theory},
  vol.~58, no.~3, pp. 1289--1301, March 2012.

\end{thebibliography}

\onecolumn
\section*{Appendix}
\begin{proof}
	Given two values $\rho_1,\rho_2$ with $\rho_1>\rho_2$, write
	\begin{align}
	i(\rho_1;F_{|\X|})-i(\rho_2;F_{|\X|}) &= \int_{0}^{\infty} \mleft(f_{\chi^2_{2\N}(\rho_1^2)}(y)-f_{\chi^2_{2\N}(\rho_2^2)}(y)\mright)\log\frac{y^{\N-1}}{f_{|\Y|^2}(y;F_{|\X|})} dy \\
	&=-\int_{0}^{\infty} \mleft(F_{\chi^2_{2\N}(\rho_2^2)}(y)-F_{\chi^2_{2\N}(\rho_1^2)}(y)\mright)\frac{d}{dy}\log\frac{f_{|\Y|^2}(y;F_{|\X|})}{y^{\N-1}} dy \label{eq:der1}
	\end{align}
	where we have integrated by parts. Now notice that
	\begin{equation}
	\int_{0}^{\infty} \mleft(F_{\chi^2_{2\N}(\rho_2^2)}(y)-F_{\chi^2_{2\N}(\rho_1^2)}(y)\mright) dy = \rho_1^2 - \rho_2^2, \label{eq:auxpdf}
	\end{equation}
	and the integrand function is always positive, being $\chi^2_{2\N}(\rho_1^2)$ statistically dominant compared to $\chi^2_{2\N}(\rho_2^2)$. Since $\chi^2_{2\N}(\rho_1^2)$ statistically dominates $\chi^2_{2\N}(\rho_2^2)$, the integrand function in \eqref{eq:auxpdf} is always positive. We can introduce an auxiliary output random variable $Q$ with PDF
	\begin{equation}\label{eq:def_fQ}
	f_Q(y;\rho_1,\rho_2) = \frac{F_{\chi^2_{2\N}(\rho_2^2)}(y)-F_{\chi^2_{2\N}(\rho_1^2)}(y)}{\rho_1^2 - \rho_2^2}, \qquad y>0
	\end{equation} 
	to rewrite \eqref{eq:der1} as follows:
	\begin{equation}
	i(\rho_1;F_{|\X|})-i(\rho_2;F_{|\X|}) = -(\rho_1^2-\rho_2^2)\int_{0}^{\infty} f_Q(y;\rho_1,\rho_2)\frac{d}{dy}\log\frac{f_{|\Y|^2}(y;F_{|\X|})}{y^{\N-1}} dy.\label{eq:afterQ}
	\end{equation}
	We evaluate the derivative in \eqref{eq:afterQ} as:
	\begin{align}
	\frac{d}{dy}\log\frac{f_{|\Y|^2}(y;F_{|\X|})}{y^{\N-1}} &\stackrel{(a)}{=} \frac{y^{\N-1}}{f_{|\Y|^2}(y;F_{|\X|})}\int_{0}^{\A} \frac{d}{dy} \frac{f_{\chi^2_{2\N}(\rho^2)}(y)}{y^{\N-1}} dF_{|\X|}(\rho) \\
	&\stackrel{(b)}{=}\frac{y^{\N-1}}{f_{|\Y|^2}(y;F_{|\X|})}\int_{0}^{\A} \mleft(\frac{f_{\chi^2_{2(\N-1)}(\rho^2)}(y)}{2y^{\N-1}}-(\frac{1}{2}+\frac{\N-1}{y})\frac{f_{\chi^2_{2\N}(\rho^2)}(y)}{y^{\N-1}}\mright) dF_{|\X|}(\rho) \\
	&=\expcnd{\frac{1}{2} \frac{f_{\chi^2_{2(\N-1)}(|\X|^2)}(|\Y|^2)}{f_{\chi^2_{2\N}(|\X|^2)}(|\Y|^2)}-(\frac{1}{2}+\frac{\N-1}{|\Y|^2}) }{|\Y|^2=y} \\
	&=\expcnd{\frac{1}{2}\frac{|\X|}{|\Y|} \frac{I_{\N-2}(|\X| |\Y|)}{I_{\N-1}(|\X| |\Y|)}-(\frac{1}{2}+\frac{\N-1}{|\Y|^2}) }{|\Y|^2=y} \label{eq:derlog}
	\end{align} 
	where in step $(a)$ we used $f_{|\Y|^2}(y;F_{|\X|}) = \int_{0}^{\A} f_{\chi^2_{2\N}(\rho^2)}(y) dF_{|\X|}(\rho)$ and in step $(b)$ the relationship
	\begin{equation}
	\frac{d}{dy} f_{\chi^2_{2\N}(\rho^2)}(y) = \frac{1}{2} f_{\chi^2_{2(\N-1)}(\rho^2)}(y)-\frac{1}{2} f_{\chi^2_{2\N}(\rho^2)}(y).
	\end{equation}
	 Putting together \eqref{eq:afterQ} and \eqref{eq:derlog} we get
	\begin{equation}
	i(\rho_1;F_{|\X|})-i(\rho_2;F_{|\X|}) = -(\rho_1^2-\rho_2^2) \expect{\expcnd{\frac{1}{2}\frac{|\X|}{|\Y|} \frac{I_{\N-2}(|\X| |\Y|)}{I_{\N-1}(|\X| |\Y|)}-(\frac{1}{2}+\frac{\N-1}{|\Y|^2})}{|\Y|^2=Q}}.
	\end{equation}
	We are now in the position to compute the derivative of the information density as:
	\begin{align}
	i'(\rho;F_{|\X|}) &= \lim_{h\rightarrow 0} \frac{i(\rho+h;F_{|\X|})-i(\rho;F_{|\X|})}{h} \\
	&=-2\rho\: \expect{\expcnd{\frac{1}{2}\frac{|\X|}{|\Y|} \frac{I_{\N-2}(|\X| |\Y|)}{I_{\N-1}(|\X| |\Y|)}-(\frac{1}{2}+\frac{\N-1}{|\Y|^2}) }{|\Y|^2=Q'}}
	\end{align}
	where $Q'\sim \chi^2_{2(\N+1)}(\rho^2)$ thanks to the following Lemma.
	\end{proof}
\begin{lem}
	We have
	\begin{equation}
	\lim_{h\rightarrow 0} f_Q(y;\rho+h,\rho) = f_{\chi^2_{2(\N+1)}(\rho^2)}(y), \qquad y>0.
	\end{equation}
\end{lem}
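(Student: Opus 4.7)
The plan is to recognize the limit as a partial derivative of the noncentral chi-squared CDF with respect to its noncentrality parameter, and then identify that derivative with (the negative of) the PDF of a noncentral chi-squared with two additional degrees of freedom.

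First, I would set $\xi = (\rho+h)^2 - \rho^2$ and rewrite \eqref{eq:def_fQ} as
\begin{align*}
f_Q(y;\rho+h,\rho) = \frac{F_{\chi^2_{2\N}(\rho^2)}(y) - F_{\chi^2_{2\N}(\rho^2+\xi)}(y)}{\xi}.
\end{align*}
Since $\xi \to 0$ as $h \to 0$, by definition the limit equals $-\partial_\lambda F_{\chi^2_{2\N}(\lambda)}(y)\big|_{\lambda=\rho^2}$, provided the CDF is differentiable in $\lambda$.

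To evaluate that derivative, I would exploit the Poisson-mixture representation of the noncentral chi-squared,
\begin{align*}
F_{\chi^2_{2\N}(\lambda)}(y) = \sum_{j=0}^{\infty} \frac{e^{-\lambda/2}(\lambda/2)^j}{j!}\, F_{\chi^2_{2\N+2j}}(y),
\end{align*}
differentiate it term-by-term with respect to $\lambda$, and shift the summation index in the piece arising from the derivative of the Poisson weight. This yields
\begin{align*}
\partial_\lambda F_{\chi^2_{2\N}(\lambda)}(y) = \tfrac{1}{2}\sum_{j=0}^{\infty} \frac{e^{-\lambda/2}(\lambda/2)^j}{j!}\bigl[F_{\chi^2_{2\N+2(j+1)}}(y) - F_{\chi^2_{2\N+2j}}(y)\bigr].
\end{align*}

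Next I would invoke the elementary identity $F_{\chi^2_k}(y) - F_{\chi^2_{k+2}}(y) = 2 f_{\chi^2_{k+2}}(y)$, which follows from a single integration by parts on the central chi-squared PDF (or can be verified by differentiating both sides in $y$). Substituting, each bracket equals $-2 f_{\chi^2_{2\N+2(j+1)}}(y)$, and recombining via the Poisson-mixture representation of the PDF of $\chi^2_{2(\N+1)}(\lambda)$ gives
\begin{align*}
\partial_\lambda F_{\chi^2_{2\N}(\lambda)}(y) = -f_{\chi^2_{2(\N+1)}(\lambda)}(y).
\end{align*}
Setting $\lambda = \rho^2$ yields the claim.

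The only technical subtlety is justifying the termwise differentiation of the Poisson mixture; this follows from dominated convergence using $F_{\chi^2_{2\N+2j}}(y) \le 1$ and the summability of the differentiated Poisson weights in a neighborhood of $\lambda = \rho^2$. An alternative calculational route would substitute the closed form $f_{\chi^2_{2\N}(\rho^2)}(y) = \tfrac{1}{2} e^{-(y+\rho^2)/2}(y/\rho^2)^{(\N-1)/2} I_{\N-1}(\rho\sqrt{y})$ and use standard recursions for the modified Bessel function, but the Poisson-mixture argument is cleaner and avoids special-function manipulations.
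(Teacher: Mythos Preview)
Your proposal is correct and follows essentially the same route as the paper: both recognize the limit as a derivative of the noncentral $\chi^2$ CDF in the noncentrality parameter, expand via the Poisson-mixture representation, differentiate termwise, and reassemble using the two-degrees-of-freedom shift identity. The only cosmetic difference is that you work directly with the CDF mixture and the identity $F_{\chi^2_k}(y)-F_{\chi^2_{k+2}}(y)=2f_{\chi^2_{k+2}}(y)$, whereas the paper carries the CDF as $\int_0^y f\,dt$, differentiates the PDF mixture, and closes with the equivalent identity $\tfrac{d}{dt}f_{\chi^2_{2(\N+1)}(\rho^2)}(t)=\tfrac{1}{2}\bigl(f_{\chi^2_{2\N}(\rho^2)}(t)-f_{\chi^2_{2(\N+1)}(\rho^2)}(t)\bigr)$.
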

\begin{proof}
	Thanks to the definition \eqref{eq:def_fQ}, we have
	\begin{align}
	\lim_{h\rightarrow 0} f_Q(y;\rho+h,\rho) &= \lim_{h\rightarrow 0} \frac{F_{\chi^2_{2\N}(\rho^2)}(y)-F_{\chi^2_{2\N}((\rho+h)^2)}(y)}{h(2\rho+h)} \\
	&=\lim_{h\rightarrow 0} \frac{1}{h(2\rho+h)} \int_{0}^{y} \mleft(f_{\chi^2_{2\N}(\rho^2)}(t)-f_{\chi^2_{2\N}((\rho+h)^2)}(t)  \mright) dt \\
	&= \frac{1}{2\rho}\int_{0}^{y} \sum_{i=0}^{\infty} \lim_{h\rightarrow 0} \frac{1}{h}\mleft(\frac{e^{-\rho^2/2}(\rho^2/2)^i}{i!}-\frac{e^{-(\rho+h)^2/2}((\rho+h)^2/2)^i}{i!} \mright) f_{\chi^2_{2\N+2i}}(t) dt \\
	&= \frac{1}{2\rho}\int_{0}^{y} \sum_{i=0}^{\infty} \frac{d}{d\rho} \mleft(\frac{e^{-\rho^2/2}(\rho^2/2)^i}{i!}\mright) f_{\chi^2_{2\N+2i}}(t) dt \\
	&= \frac{1}{2}\int_{0}^{y} \sum_{i=0}^{\infty}  \mleft(-\frac{e^{-\rho^2/2}(\rho^2/2)^i}{i!}+\frac{e^{-\rho^2/2}(\rho^2/2)^{i-1}}{(i-1)!}\mathbbm{1}(i\ge 1)\mright) f_{\chi^2_{2\N+2i}}(t) dt	\\
	&= \frac{1}{2}\int_{0}^{y}  \mleft(-f_{\chi^2_{2\N}(\rho^2)}(t)+f_{\chi^2_{2(\N+1)}(\rho^2)}(t)\mright)  dt \\
	&=\int_{0}^{y}  \frac{d}{dt}f_{\chi^2_{2(\N+1)}(\rho^2)}(t)  dt \\
	&=f_{\chi^2_{2(\N+1)}(\rho^2)}(y),
	\end{align}
	where $\mathbbm{1}(\cdot)$ is the indicator function.
	\end{proof}

\end{document}